\newtheorem{lemma}{Lemma}
\newtheorem{proof}{Proof}
\begin{document}
\newcommand\bra[2][]{#1\langle {#2} #1\rvert}
\newcommand\ket[2][]{#1\lvert {#2} #1\rangle}
\title{Encoding lattice structures in Quantum Computational Basis States}
\author{Kalyan Dasgupta}
\date{%
    IBM Research, Bangalore, India\\%
    }
\maketitle
\date{}

\section{Introduction}
Lattice models or structures are geometrical objects with mathematical forms, that are used to represent physical systems. They have been used widely in diverse fields, namely, in condensed matter physics, to study degrees of freedom of molecules in chemistry and in studying polymer dynamics and protein structures to name a few \cite{dimacs_rutgers}. In this article we discuss an encoding methodology of lattice structures in computational basis states of qubits (as used in quantum computing algorithms). We demonstrate a specific use case of lattice models in protein structure prediction. A variety of lattice models have been used for protein structure prediction. Although, proteins have highly irregular structures, lattice models have been used to predict structures at a basic level. Lattice models fall under the category of coarse-grained models. The geometry considered for coarse-grained models could either have a continuous representation or have a lattice representation \cite{cg-model}). Lattice models offer significant computational speedup over other representations. Fig. \ref{fig:lattice_structs} gives some of the lattice representations.
\begin{figure}[htp]
\centering 
\includegraphics[width=5in]{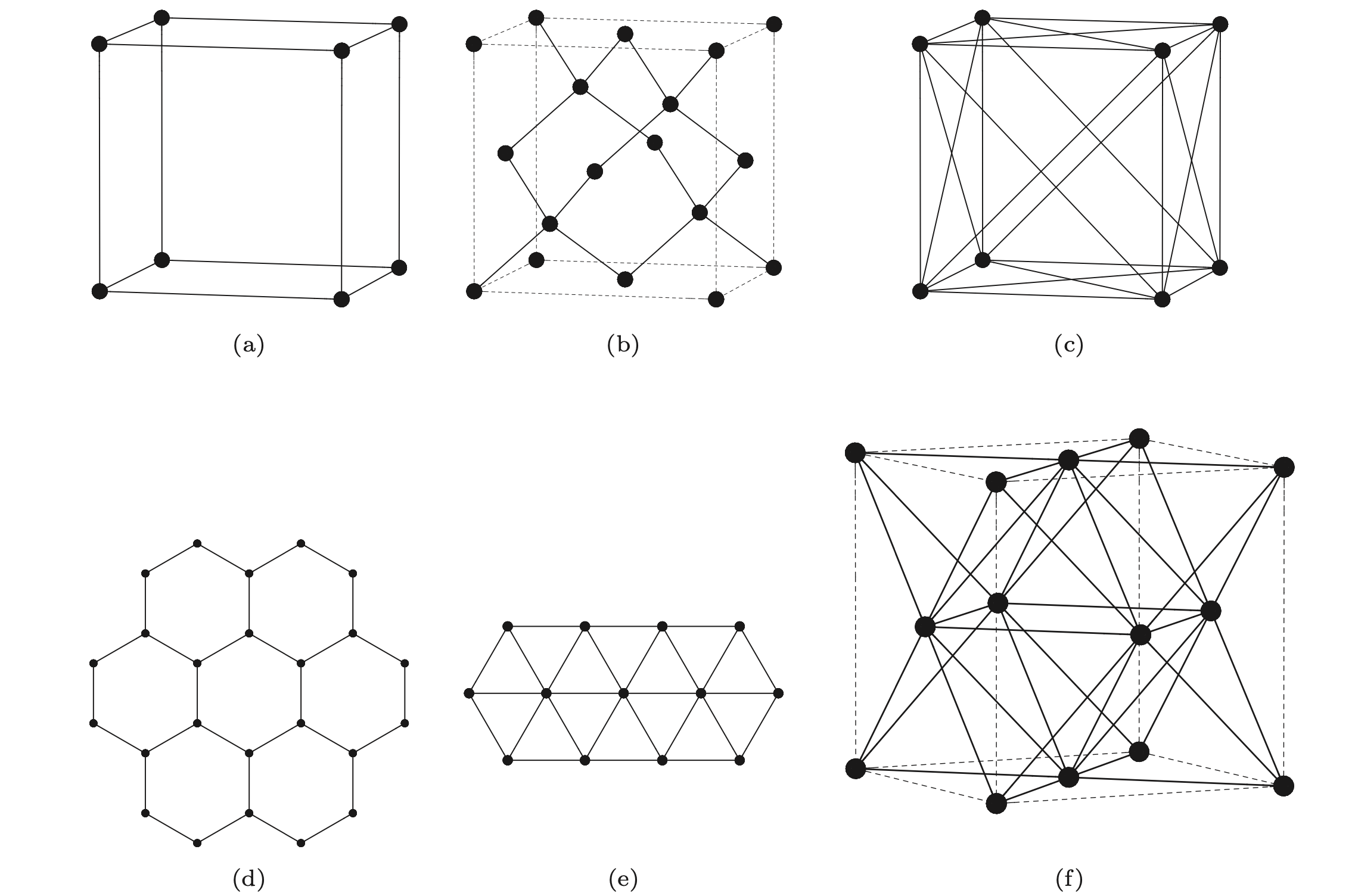} 
\caption{Lattice structures: (a) cubic, (b) diamond, (c) cubic with planar diagonals, (d) hexagonal, (e) triangular and (f) face-centred-cubic. Courtesy: \cite{dimacs_rutgers}}
\label{fig:lattice_structs}
\end{figure}

The protein structure prediction problem suffers due to the large feasible conformational space. Choosing the correct conformation from such a vast space is a big challenge. With quantum hardwares, one can create $2^n$ computational basis states form $n$ qubits. One can find several papers using quantum computing for protein structure prediction using lattice models. Cubic lattices have been used in references \cite{Babej_2018} and \cite{Perdomo}. 
In reference \cite{Anton_2021}, the authors presented a quantum model for a coarse-grained protein sequence mapped onto a tetrahedral lattice structure. In reference \cite{paper-arxiv} the authors propose a turn based encoding in a cubic lattice with larger degrees of freedom using the cubic lattice with planar and steric diagonals. 

In this article we do not propose any quantum algorithm to solve the protein structure prediction problem, instead, we propose a generic encoding methodology of lattice structures. The protein sequence, essentially, consists of turns or bonds between adjacent monomers or amino acids (also referred as a bead in coarse-grained models). The turn that the bonds can take are limited by the degrees of freedom of the lattice model being used. We show how the encoding methodology encode the turns based on the lattice model selected. We take two specific lattice models, the cubic with planar diagonals and the face-centred cubic (FCC), and demonstrate the encoding methodology. The article is organized as follows. In section 2, we discuss the lattice models and the coordinate space that the turns can take. In section 3 we discuss how the coordinates are mapped to the qubit computational basis states. Finally, we summarize the methodology in the conclusions. 

\section{Lattice models and coordinate space}
Fig. \ref{fig:fcc_lattice} (a) gives a section of the structure of the FCC lattice. Given a protein bead at the origin, we would like to know the possible set of 3D coordinates that the next bead could take following a turn. Fig. \ref{fig:fcc_lattice} (b) gives the possible coordinates in a given plane $y-z$. 
\begin{figure}[htp]
\centering 
\includegraphics[width=6in]{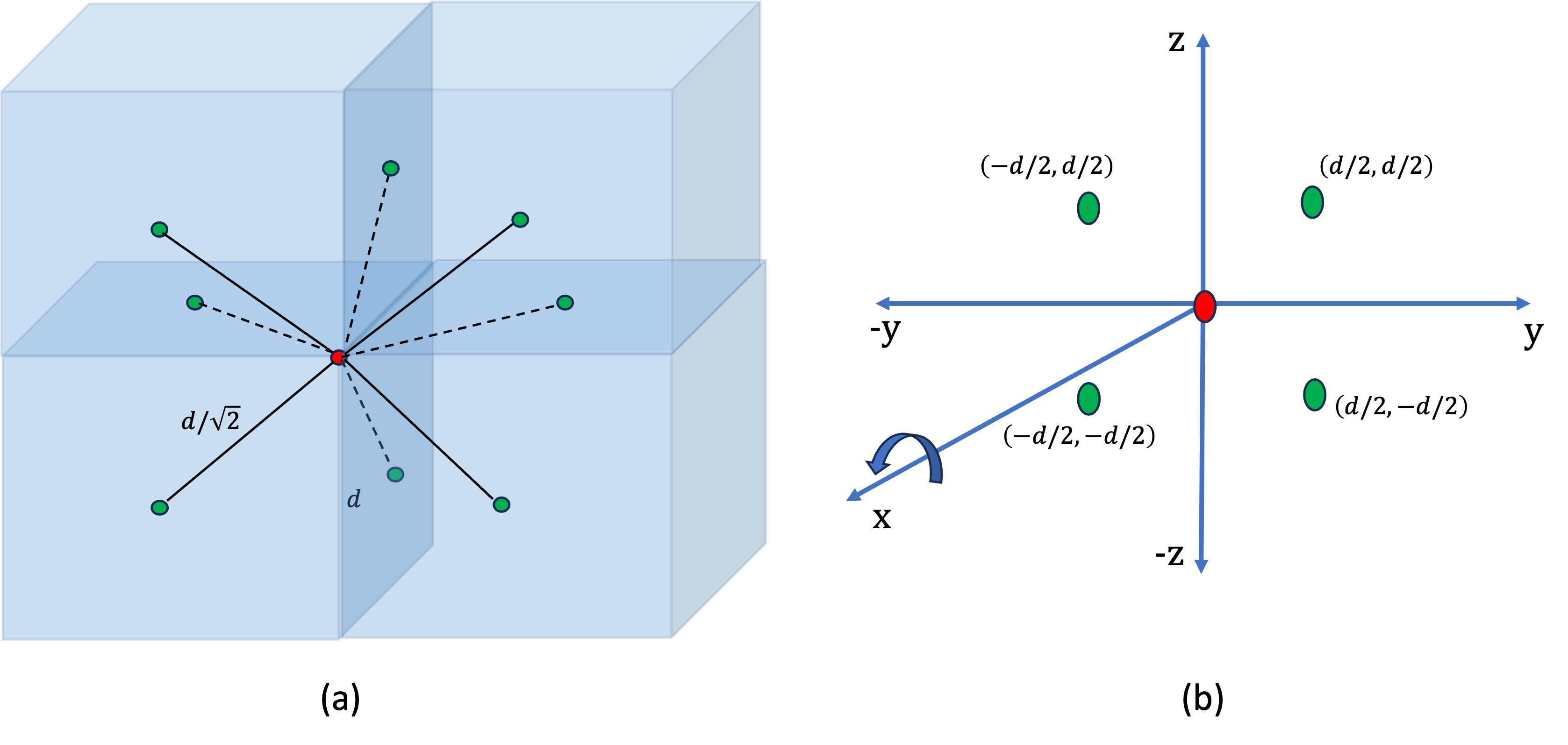} 
\caption{(a) Section of FCC lattice, (b) coordinate space that a turn could take}
\label{fig:fcc_lattice}
\end{figure}

As can be seen from the figure, the bead coloured red in the figure can have bonds with beads coloured in green. The green beads are at the centre of the faces of the cubes. The bond distance between a bead at a corner/vertex and a bead at the centre of a face is $\frac{d}{\sqrt{2}}$. Beads located at the centre of the faces of the cube can have bonds only with beads that are located at the vertices. Similarly, beads located at the vertices can have bonds with beads located at the centre of the faces. Overall, in a given plane, there are $4$ possible options to choose from. However there are $3$ planes, the $x-y$, $y-z$ and the $z-x$ planes. The turn could be in any one of those planes. There are $12$ possible directions or degrees of freedom in which the bond could turn between adjacent beads.

Now, let us have a look at the cubic lattice with planar diagonals. Fig. \ref{fig:cubic_lattice} (a) gives the section of a cubic lattice and \ref{fig:cubic_lattice} (b) gives the coordinate mapping. 
\begin{figure}[htp]
\centering 
\includegraphics[width=6in]{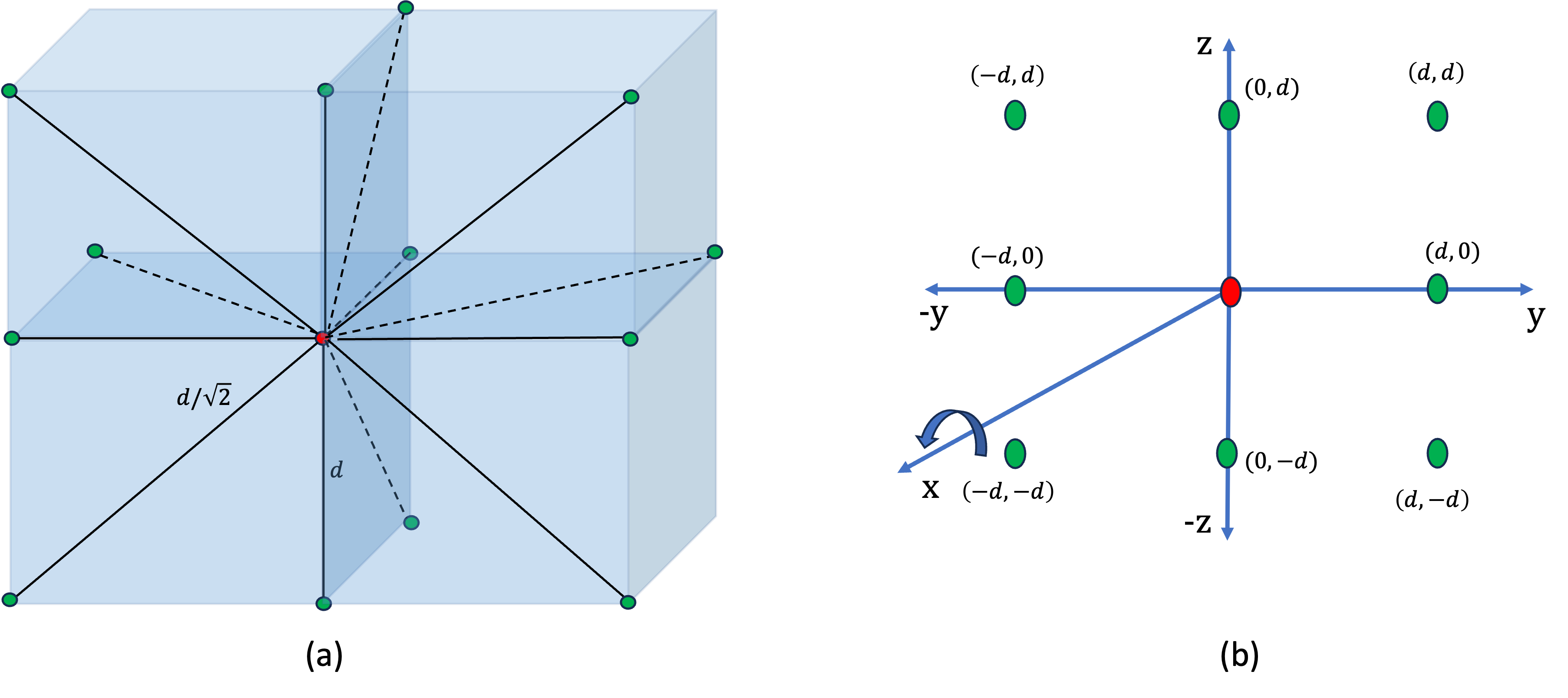} 
\caption{(a) Section of cubic lattice with planar diagonals, (b) coordinate space that a turn could take}
\label{fig:cubic_lattice}
\end{figure}
In the cubic lattice, the turns could be in $8$ directions in a given plane. With 3 planes, there are $24$ possible turns. However, there for every set of two planes, there are two directions that are repeated. For example, if we consider the $x-y$ plane and the $y-z$ plane, the $\pm y$ directions are repeated. As a result, we will have $6$ such planes that are repeated. Overall, we will have $24-6 = 18$ degrees of freedom here. 

In order to encode, we need to map the qubit states to bead coordinates. Let us consider the FCC lattice. We can start from the origin as the coordinates of the first bead. The second bead will take up a position at any one of the vertices or faces as shown in Fig. \ref{fig:fcc_lattice} (b). If we denote the axes as $x, y$ and $z$, the location of the second bead could be in any one of the $xy, yz$ or $zx$ planes. The red bead is at the origin. If the second bead were to be in the $y-z$ plane, the possible coordinates that it could have are shown in the figure. The coordinates are the $y$ and $z$ coordinates given as $(y,z)$. This is a rotation around the $x$-axis at $90^{\circ}$ interval for $4$ different points, in a counter-clockwise direction. We would see a similar set of coordinates in the $z-x$ and $x-y$ plane if the second bead were to be in the those planes. In Fig. \ref{fig:fcc_lattice} (b) the $x$ coordinates of the red and green beads are the same. We could carry forward this logic to every bead in the protein sequence. If we consider any bead $i$ with coordinates $(x,y,z)$, the bead $i+1$ would then have coordinates $(x + \Delta x,y + \Delta y,z + \Delta z)$. If the bead $i+1$ is in the plane parallel to the $y-z$ plane, $\Delta x = 0$ and $\Delta y$ and $\Delta z$ would take values as shown in Fig. \ref{fig:fcc_lattice} (b). If it is in the plane parallel to the $z-x$ plane (counter-clockwise rotation around $y$-axis),  $\Delta z$ and $\Delta x$ would take up the same values as $\Delta y$ and $\Delta z$ in the previous case, and in the same order. Similarly, when the the bead $i+1$ is in the plane parallel to the $x-y$ plane (counter-clockwise rotation around $z$-axis), $\Delta x$ and $\Delta y$ would take up similar values. 

If we look at Fig. \ref{fig:fcc_lattice}, the values that $\Delta y$ takes are $\left[\frac{d}{2}, -\frac{d}{2}, -\frac{d}{2}, \frac{d}{2} \right]$. The values that $\Delta z$ takes on the other hand are $\left[\frac{d}{2}, \frac{d}{2}, -\frac{d}{2}, -\frac{d}{2} \right]$. Let us denote the two vectors as $\Delta a$ and $\Delta b$.
\begin{equation}
    \Delta a = \left[ \begin{array}{r}
         \frac{d}{2} \\ -\frac{d}{2} \\ -\frac{d}{2} \\ \frac{d}{2} \end{array}\right] ~~~~
    \Delta b = \left[ \begin{array}{r}
         \frac{d}{2} \\ \frac{d}{2} \\ -\frac{d}{2} \\ -\frac{d}{2} \end{array}\right]   \label{delta_xyz}  
\end{equation} 

The coordinates of bead $i+1$ could be any one of the following.
\begin{flalign}
\begin{aligned}
\textrm{Parallel to the $y-z$ plane:}&~~ x_{i+1} = x_i, ~~~~~~~~~~~y_{i+1} = y_i + \Delta a_k,~~ z_{i+1} = z_i + \Delta b_k \\
\textrm{Parallel to the $z-x$ plane:}&~~ x_{i+1} = x_i + \Delta b_k,~~ y_{i+1} = y_i,~~~~~~~~~~~ z_{i+1} = z_i + \Delta a_k \\
\textrm{Parallel to the $x-y$ plane:}&~~ x_{i+1} = x_i + \Delta a_k,~~ y_{i+1} = y_i  + \Delta b_k,~~ z_{i+1} = z_i 
\end{aligned} \label{coord_update}
\end{flalign}
In the expression above, $\Delta a_k$  and $\Delta b_k$ are elements with a given index $k$ (depending upon the bond direction or turn taken) from the arrays given in (\ref{delta_xyz}).

Similarly, with the cubic lattice with planar diagonals, there are $8$ possible directions. The vectors $\Delta a$ and $\Delta b$ will have the form shown in (\ref{delta_xyz_cubic}).
\begin{equation}
    \Delta a = \left[ \begin{array}{r}
         d \\ d \\ 0 \\ -d \\ -d \\ -d \\ 0 \\ d \end{array}\right] ~~~~
    \Delta b = \left[ \begin{array}{r}
         0 \\ d \\ d \\ d \\ 0 \\ -d \\ -d \\ -d \end{array}\right]  \label{delta_xyz_cubic}  
\end{equation} 

\section{Encoding to qubit states}
We will take the example of the cubic lattice with planar diagonals to demonstrate the encoding methodology. For every plane we need to encode $8$ bits of information to encode the turn. This will be possible with, $\log_2 8 = 3$ qubits. The elements in $\Delta a$  and $\Delta b$ can be obtained by having a linear combination of the qubit states. The qubit states will be in a state of superposition and so will the elements in $\Delta a$  and $\Delta b$. As we will show in lemma \ref{lemma_qbasis}, any function $f \in \mathcal{R}^n$ can be expressed as a linear combination of product of qubit states. This mixture of qubit states forms the basis in $\mathcal{R}^N$.

\begin{lemma} \label{lemma_qbasis}
 Any function $f \in \mathcal{R}^N$ can be expressed as a linear combination of product of $n = \log_2 N$ qubit states. This set of product of qubit states will form a basis set and span the space in $\mathcal{R}^N$.
\end{lemma}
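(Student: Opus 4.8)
The plan is to recognise the statement as the assertion that the computational basis states of $n$ qubits form an orthonormal basis of $\mathcal{R}^N$ with $N = 2^n$, so that every $f \in \mathcal{R}^N$ admits an expansion in these product states. First I would fix the single-qubit space as the two-dimensional real span of $\ket{0}$ and $\ket{1}$, and build the $n$-qubit space as the $n$-fold tensor (Kronecker) product of this space with itself; a dimension count then gives $2^n = N$, matching the ambient space.

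The core argument I would run by induction on $n$. For $n=1$, $\{\ket{0}, \ket{1}\}$ is by definition a basis of $\mathcal{R}^2$. For the inductive step, assuming the $2^{n-1}$ product states $\ket{b_1}\otimes\cdots\otimes\ket{b_{n-1}}$ with each $b_j \in \{0,1\}$ form a basis of $\mathcal{R}^{2^{n-1}}$, I would tensor each of them on the right with $\ket{0}$ and with $\ket{1}$, producing $2 \cdot 2^{n-1} = 2^n$ product states $\ket{b_1}\otimes\cdots\otimes\ket{b_n}$. To show these span $\mathcal{R}^N$ it suffices to establish linear independence, which I would obtain from orthonormality: using the mixed-product property $\langle u_1\otimes u_2,\, v_1\otimes v_2\rangle = \langle u_1, v_1\rangle\,\langle u_2, v_2\rangle$, the inner product of two such product states factorises into single-qubit overlaps $\langle b_j | b_j'\rangle = \delta_{b_j b_j'}$, which equals $1$ exactly when the bit strings coincide and $0$ otherwise. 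Thus the $N$ product states are orthonormal, and $N$ orthonormal vectors in an $N$-dimensional space are automatically a basis.

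Having a basis, I would conclude that any $f \in \mathcal{R}^N$ is expressible as $f = \sum_{b\in\{0,1\}^n} c_b\, \ket{b_1}\otimes\cdots\otimes\ket{b_n}$, where the coefficients are the projections $c_b = \langle b_1\cdots b_n | f\rangle$. This is precisely the claimed linear combination of products of qubit states, and the orthonormality simultaneously yields both the spanning property and the uniqueness of the expansion.

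The main obstacle is less a computation than a point of care in interpreting the statement: a generic product state $\ket{\psi_1}\otimes\cdots\otimes\ket{\psi_n}$ with arbitrary single-qubit factors does not by itself form a basis, since the family of all product states is linearly dependent and over-complete. The statement becomes correct precisely when one restricts to the $N$ computational product states built from $\ket{0}$ and $\ket{1}$. Hence the one thing I would pin down explicitly is which specific family of product states is meant, and I would make clear that it is exactly this family that is both linearly independent and spanning.
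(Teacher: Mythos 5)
Your argument is internally sound, but it proves a different statement from the one the paper intends, and the difference matters. You read ``product of qubit states'' as the tensor-product computational basis states $\lvert b_1\cdots b_n\rangle$; under that reading the lemma reduces to the triviality that the standard basis of $\mathcal{R}^N$ is a basis, and your orthonormality argument is a correct (if heavyweight) way of saying so. The paper, however, means something else: its ``products'' are the $2^n$ multilinear monomials $\prod_{i\in S} q_i$ in the bit-valued qubit variables $q_1,\dots,q_n$ --- one for each subset $S\subseteq\{1,\dots,n\}$, including the empty product, i.e.\ the constant term --- each evaluated on all $2^n$ bit strings to produce a column of the matrix $B_n$ displayed in (\ref{B3}). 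The substantive content of the lemma is that these $2^n$ evaluation vectors are linearly independent, i.e.\ that $B_n$ is invertible; this is exactly what licenses the coefficient formula $c=B_n^{-1}f$ in (\ref{coeff_eqn}) and the expansions such as (\ref{Delta_eqn}) that drive the whole encoding. Your orthonormality computation does not transfer to this family: the monomial columns are not orthogonal (the all-ones column for the constant term has positive inner product with every other column), so linear independence must be argued by other means.

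The clean way to close the gap is to index both the rows (bit strings, identified with the set $T$ of qubits equal to $1$) and the columns (monomials, identified with the set $S$ of qubits appearing in the product) and order both by any linear extension of set inclusion. The $(T,S)$ entry of $B_n$ is $\prod_{i\in S}q_i$ evaluated at $T$, which equals $1$ if $S\subseteq T$ and $0$ otherwise; after this reordering $B_n$ is lower unitriangular, hence has determinant $1$ and is invertible. Equivalently, the lemma is the standard fact that every function $f:\{0,1\}^n\to\mathbb{R}$ admits a unique multilinear polynomial representation. Note that the paper's own justification (``since the combinations of qubits are unique, all the basis vectors will be independent'') is itself too quick on precisely this point, so reworking your proof along these lines would supply the argument that is actually needed rather than the one you gave.
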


\begin{proof}
If we are given $n$ qubits, we can have $2^n$ qubit states. For example, with $3$ qubits the qubit states denoted by $QS_3$ will be as follows.
\begin{flalign}
QS_3 = \left[\begin{array}{ccc}
    q_1 & q_2 & q_3 \\
    \hline
    0 & 0 & 0 \\
    0 & 0 & 1 \\   
    0 & 1 & 0 \\
    0 & 1 & 1 \\
    1 & 0 & 0 \\
    1 & 0 & 1 \\
    1 & 1 & 0 \\
    1 & 1 & 1 
\end{array}\right] \label{QS3}
\end{flalign}
The basis that spans $\mathcal{R}^N$, can be formed by taking the product of qubit states. In an $n$-qubit system, there are ${n \choose 1}$ ways to choose one qubit, ${n \choose 2}$ ways to choose a product of two qubits, ${n \choose 3}$ ways to choose a product of three qubits, and so on till ${n \choose n}$ ways to choose product of $n$ qubits. We will also have a constant term or an intercept, which is essentially the ${n \choose 0}$ term. Together we will have 
\begin{equation*}
    {n \choose 0} + {n \choose 1} + {n \choose 2} + \hdots {n \choose n-1} + {n \choose n} = 2^n
\end{equation*}
terms. These $N=2^n$ terms will form the basis of $\mathcal{R}^N$. Any function $f \in \mathcal{R}^N$ 
can then be expressed as linear combination of these basis terms. 
\begin{equation*}
    f = \sum _{i=1} ^{N} c_i b_i
\end{equation*}
In the equation above, $b_i \in \mathcal{R}^n$ is the $i^{th}$ basis and $c_i \in \mathcal{R}$ is the coefficient. Every element of $b_i$ can be obtained by substituting the values of the qubits from a table like $QS_3$ above. Since the combinations of qubits are unique, all the basis vectors will be independent and the basis matrix $B_n = \left[ b_1, b_2, \hdots b_N \right]$ will have full rank. The coefficients can then be obtained from the equation given in (\ref{coeff_eqn}).
\begin{equation}
    c = B_n^{-1}f \label{coeff_eqn}
\end{equation}
where $c = \left[ c_1, c_2, \hdots c_N\right]^T$.
\end{proof}

To illustrate this in a $3$-qubit system, let us consider the ways we can get the product states. Let the qubits in the $3$-qubit system be $q_1, q_2$ and $q_3$. There are ${3 \choose 1} = 3$ ways of choosing one qubit, i.e., $q_1$, $q_2$ or $q_3$. There are ${3 \choose 2} = 3$ ways of choosing two qubit product states, $q_1q_2$, $q_2q_3$ and $q_3q_1$. There is ${3 \choose 3} = 1$ way of choosing 3 qubit product states, i.e. $q_1q_2q_3$. Finally, we have the constant term or the intercept term which comes from ${3 \choose 0}$. Overall, we have $8$ states each with $8$ elements. The structure of the basis matrix is given in Fig. \ref{fig:3_qubit_basis}. The first column having all $1$s correspond to the constant term.
\begin{figure}[htp]
\centering 
\includegraphics[width=2.5in]{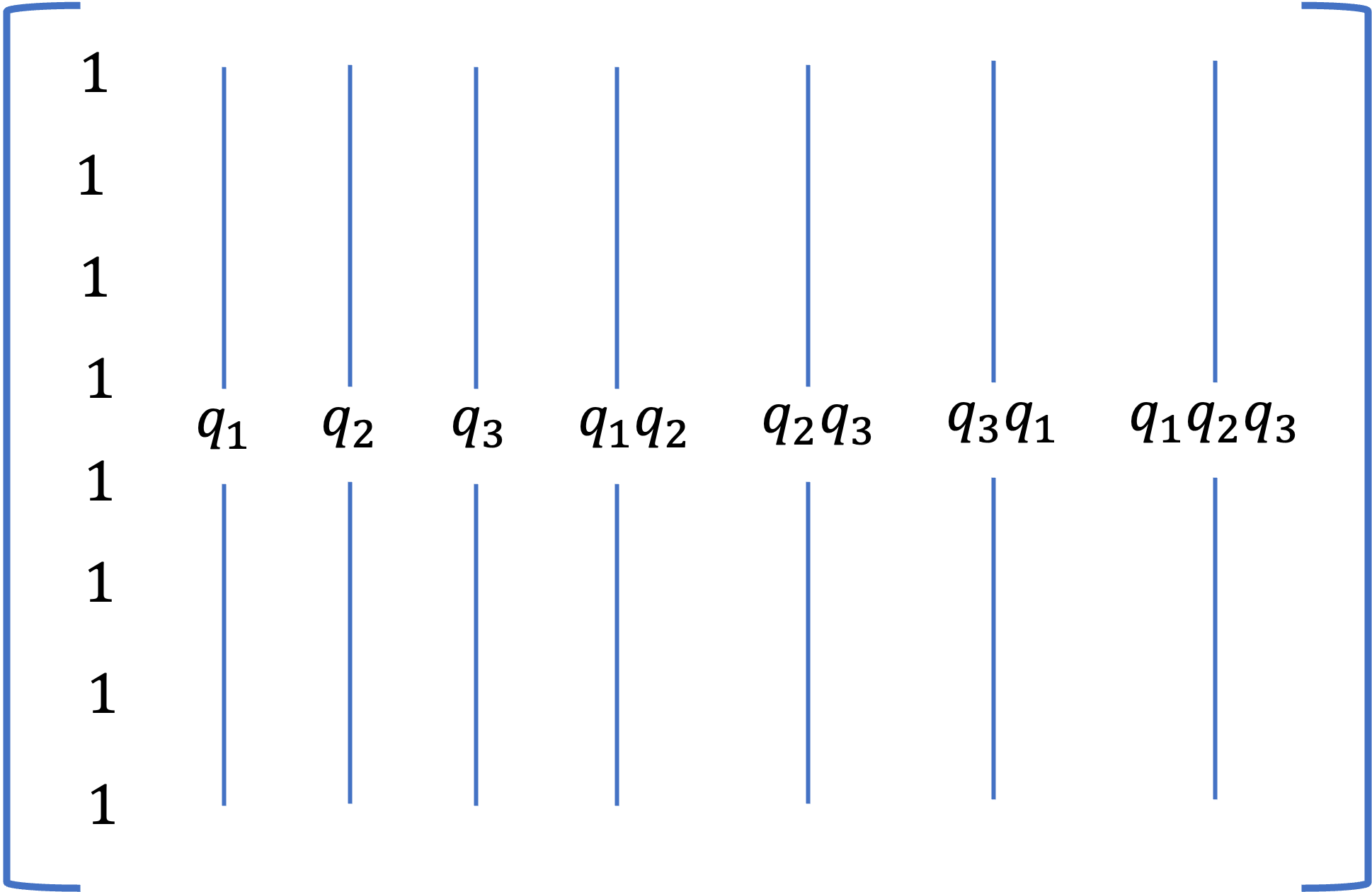} 
\caption{Basis matrix structure for a 3-qubit system}
\label{fig:3_qubit_basis}
\end{figure}

If we substitute the values of $q_1, q_2$ and $q_3$ from $QS_3$ (refer (\ref{QS3}) above) in the matrix given in Fig. \ref{fig:3_qubit_basis}, we will get the basis matrix $B_3$ as given below.
\begin{equation}
    B_3 = \left[\begin{array}{cccccccc}
         1	&	0	&	0	&	0	&	0	&	0	&	0	&	0	\\
1	&	0	&	0	&	1	&	0	&	0	&	0	&	0	\\
1	&	0	&	1	&	0	&	0	&	0	&	0	&	0	\\
1	&	0	&	1	&	1	&	0	&	0	&	1	&	0	\\
1	&	1	&	0	&	0	&	0	&	0	&	0	&	0	\\
1	&	1	&	0	&	1	&	0	&	1	&	0	&	0	\\
1	&	1	&	1	&	0	&	1	&	0	&	0	&	0	\\
1	&	1	&	1	&	1	&	1	&	1	&	1	&	1	\\
    \end{array} \right] \label{B3}
\end{equation}
We can then represent any vector in $\mathcal{R}^8$ as a linear combination of the columns in $B_3$.

\subsection{Encoding lattice structure in qubit states}
The vectors $\Delta a$ and $\Delta b$ in (\ref{delta_xyz_cubic}) can be represented using the basis vectors in $B_3$. The coefficients can be estimated from the following operations.
\begin{equation}
    \begin{aligned}
        c_{\Delta a} &= B_3 ^{-1}\Delta a \\
        c_{\Delta b} &= B_3 ^{-1}\Delta b 
    \end{aligned} \label{coeff_est}
\end{equation}
The coefficients so obtained for the cubic lattice are as follows $$c_{\Delta a} = \left[ 1.0, -2.0, -1.0, 0.0, 2.0, 0.0, -1.0, 2.0  \right]$$ and $$c_{\Delta b} = \left[ 0.0, 0.0, 1.0, 1.0, -2.0, -2.0, -1.0, 2.0 \right]$$. In other words we can express $\Delta a$ and $\Delta b$ as follows.
\begin{equation}
    \begin{aligned}
       \Delta a &= 1 - 2q_1 - q_2 + 2q_1q_2 - q_3q_1 + 2q_1q_2q_3\\
       \Delta b &= q_2 + q_3 - 2q_1q_2 - 2q_2q_3 - q_3q_1 + 2q_1q_2q_3
    \end{aligned} \label{Delta_eqn}
\end{equation}
With $\Delta a$ and $\Delta b$ we can update the coordinates of bead $i+1$ when the coordinates of bead $i$ are known. The coordinates of course will be in a state of superposition of all the options given in Fig. \ref{fig:cubic_lattice} (b).

We can do a similar exercise for the FCC lattice, where there are 4 options in a given plane. We will need $\log_2 4 = 2$ qubits. The matrix $QS_2$ to form the matrix $B_2$ is given as follows.
\begin{flalign}
QS_2 = \left[\begin{array}{cc}
    q_1 & q_2 \\
    \hline
    0 & 0 \\
    0 & 1 \\   
    1 & 0 \\
    1 & 1 \\
\end{array}\right] \label{QS2}
\end{flalign}
The basis $B_2$ will have the form shown in Fig. \ref{fig:2_qubit_basis}.
\begin{figure}[htp]
\centering 
\includegraphics[width=2.0in]{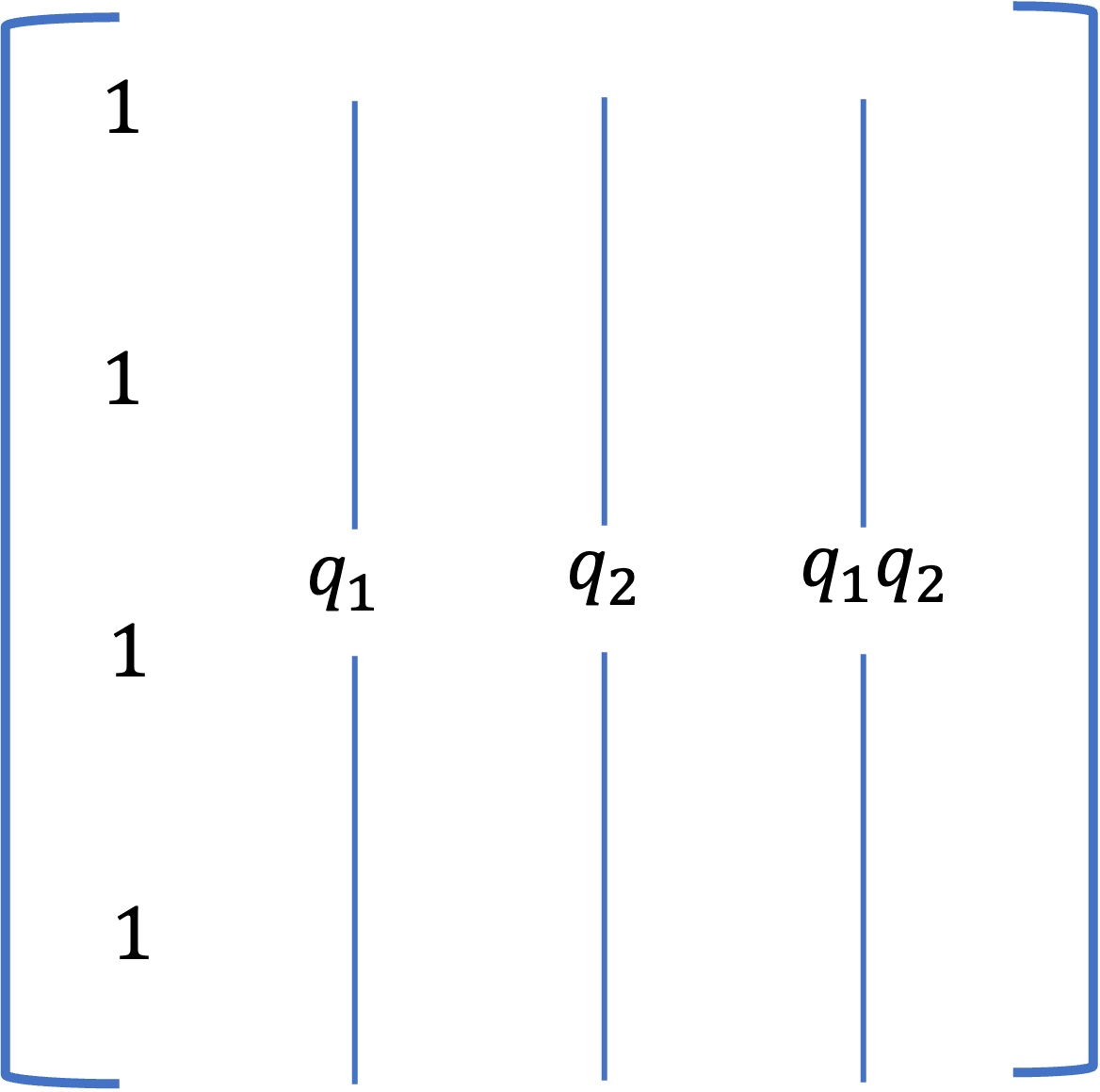} 
\caption{Basis matrix structure for a 2-qubit system}
\label{fig:2_qubit_basis}
\end{figure}
By substituting the values of the $q_1$ and $q_2$ in $QS_2$ in the matrix given in Fig. \ref{fig:2_qubit_basis}, we will get the basis matrix $B_2$ as given below.
\begin{equation}
    B_2 = \left[\begin{array}{cccc}
         1	&	0	&	0	&  0\\
1	&	0	&	1	&  0\\
1	&	1	&	0	&  0\\
1	&	1	&	1	&  1\\
    \end{array} \right] \label{B2}
\end{equation}
As in the previous case, the vectors $\Delta a$ and $\Delta b$ in (\ref{delta_xyz}) can be represented using the basis vectors in $B_2$. The coefficients can be estimated the following way.
\begin{equation}
    \begin{aligned}
        c_{\Delta a} &= B_2 ^{-1}\Delta a \\
        c_{\Delta b} &= B_2 ^{-1}\Delta b 
    \end{aligned}
\end{equation}
The coefficients obtained for the FCC lattice are as follows.
$$c_{\Delta a} = \left[ 0.5, -1.0, -1.0, 2.0  \right]$$ and $$c_{\Delta b} = \left[ 0.5, -1.0, 0, 0 \right]$$. 
We can thus express $\Delta a$ and $\Delta b$ as follows.
\begin{equation}
    \begin{aligned}
       \Delta a &= 0.5 - q_1 - q_2 + 2q_1q_2 \\
       \Delta b &= 0.5 - q_1 \\
    \end{aligned} \label{Delta_eqn}
\end{equation}

\subsection{Selection of the plane of turn}
Since there are $3$ orthogonal planes ($x-y,y-x$ and $z-x$), we need two more qubits more to make a selection of one of the planes. Let these two qubits be denoted as $q_4$ and $q_5$. The selection could be based on the computational basis states of the $q_4$ and $q_5$ as follows.
\begin{equation}
    \begin{array}{ccc}
        \textrm{Plane} & \textrm{Basis state} & \textrm{Qubit encoding}  \\
        \hline
        y-z & 01 & (1 - q_4)q_5 \\
        z-x & 10 & q_4(1 - q_5) \\
        x-y & 11 & q_4q_5
    \end{array} \label{plane_select}
\end{equation}

With the given selection criteria, we have to update equation (\ref{coord_update}) as follows.
\begin{flalign}
\begin{aligned}
    x_{i+1} &= x_i + q_4q_5\Delta a_k + q_4(1 - q_5)\Delta b_k \\
    y_{i+1} &= y_i + (1 - q_4)q_5\Delta a_k + q_4q_5\Delta b_k \\
    z_{i+1} &= z_i + q_4(1 - q_5)\Delta a_k + (1 - q_4)q_5\Delta b_k \\
\end{aligned} \label{coord_update2}
\end{flalign}

The coordinates of all the beads will be updated using equation (\ref{coord_update2}) and equation (\ref{Delta_eqn}). Overall, we will need $5$ qubits to encode a turn going from one bead to the next in a cubic lattice and $4$ qubits to encode a turn in the FCC lattice. $2$ qubits for selecting a plane and $\left \lceil{\log_2 N}\right \rceil $ qubits for encoding the bond direction (or turn) in that plane. If the peptide chain is represented by $m$ beads, we will need $(\left \lceil{\log_2 N}\right \rceil + 2)\times (m-1)$ qubits to encode the folded structure.

\subsection{Extending it to other structures}
This concept can be extended to other lattice structures as well. We need not have turns along orthogonal planes like the cubic or FCC lattices. The planes could be at an angle to the working plane. As long as the every bead or lattice point has equal and identical degrees of freedom, we could use the methodology explained here. In case, the lattice involves directions that are always at angle, we could do away with the qubits involved in plane selection and directly go for encoding directions to the computational basis states. Under such circumstances, the coordinate updates following a turn would be of the form given in (\ref{coord_upd}).
\begin{equation}
    x_{i+1} = x_i + \Delta a_k,~~ y_{i+1} = y_i  + \Delta b_k,~~ z_{i+1} = z_i + \Delta c_k 
    \label{coord_upd}
\end{equation}
The increments $\Delta a_k, \Delta b_k$ and $\Delta c_k$ can be obtained by using a variant of the equation (\ref{coeff_est}).
\begin{equation}
    \begin{aligned}
        c_{\Delta a} &= B_3 ^{-1}\Delta a \\
        c_{\Delta b} &= B_3 ^{-1}\Delta b \\
        c_{\Delta c} &= B_3 ^{-1}\Delta c 
    \end{aligned} \label{coeff_est}
\end{equation}
If there are $N$ unique directions, we could use $\left \lceil{\log_2 N}\right \rceil $ qubits to encode each of these directions.

\section{Conclusions}
In this article, we discussed a methodology to encode lattice turns to qubit computational basis states, that could find use in protein structure prediction, polymer structure study and other coarse-grained models. We showed how a combination of qubit states could be used to span the space of directions. These directions could be along planes that are orthogonal or non-orthogonal to each other. We took specific examples of cubic and FCC lattice. However, they could be extended to other lattice structures as well. 
 
\bibliographystyle{IEEEtran}
\bibliography{ref}

\end{document}